\def\@settitle{\begin{center}%
		\baselineskip14\p@\relax
		\normalfont\LARGE\scshape\bfseries
		\@title
	\end{center}%
}
\def\@setauthors{%
  \begingroup
  \def\thanks{\protect\thanks@warning}%
  \trivlist
  \centering\footnotesize \@topsep30\p@\relax
  \advance\@topsep by -\baselineskip
  \item\relax
  \author@andify\authors
  \def\\{\protect\linebreak}%
  \authors%
  \ifx\@empty\contribs
  \else
    ,\penalty-3 \space \@setcontribs
    \@closetoccontribs
  \fi
  \endtrivlist
  \endgroup
}
\def\subsection{\@startsection{subsection}{2}%
	\z@{.5\linespacing\@plus.7\linespacing}{.5\linespacing}%
	{\normalfont\large\bfseries}}
\def\subsubsection{\@startsection{subsubsection}{3}%
	\z@{.5\linespacing\@plus.7\linespacing}{.5\linespacing}%
	{\normalfont\itshape}}
\definecolor{darkblue}{rgb}{0.0, 0.0, 0.45}
\date{\today}
\theoremstyle{theorem}
\newtheorem{Thm}{Theorem}[section]
\newtheorem{Prop}[Thm]{Proposition}
\newtheorem{Cor}[Thm]{Corollary}
\newtheorem{Rem}[Thm]{Remark}
\newcommand{\mPr}{\mathbf{Pr}}
\newcommand{\mE}{\mathbf{E}}
\newcommand{\mfq}{\mathfrak{q}}
\newcommand{\F}{\mathds{F}}
\newcommand{\T}{\mathds{T}}
\newcommand{\R}{\mathbb{R}}
\newcommand{\N}{\mathbb{N}}
\begin{document}

\title[]{Real-Time Ground Fault Detection \\ for Inverter-Based Microgrid Systems}
\author{Jingwei Dong, Yucheng Liao, Haiwei Xie, Jochen Cremer, and Peyman Mohajerin Esfahani}
\thanks{The authors are with the Delft Center for Systems and Control (\{J.Dong-6, P.MohajerinEsfahani\}@tudelft.nl, Y.Liao-3@student.tudelft.nl) and the DAI Energy Lab (\{H.Xie-2, J.L.Cremer\}@tudelft.nl), Delft University of Technology, The Netherlands. This work is supported by the ERC (European Research Council) grant TRUST-949796 and CSC (China Scholarship Council) with funding number: 201806120015. The authors are grateful to M. Popov and A. Lekic for the valuable discussion on inverter-based microgrid systems.}
\maketitle

\begin{abstract}
    Ground fault detection in inverter-based microgrid~(IBM) systems is challenging, particularly in a real-time setting, as the fault current deviates slightly from the nominal value. 
    This difficulty is reinforced when there are partially decoupled disturbances and modeling uncertainties.
    The conventional solution of installing more relays to obtain additional measurements is costly and also increases the complexity of the system.  
    In this paper, we propose a data-assisted diagnosis scheme based on an optimization-based fault detection filter with the output current as the only measurement.
    Modeling the microgrid dynamics and the diagnosis filter, we formulate the filter design as a quadratic programming (QP) problem that accounts for decoupling partial disturbances, robustness to non-decoupled disturbances and modeling uncertainties by training with data, and ensuring fault sensitivity simultaneously. To ease the computational effort, we also provide an approximate but analytical solution to this QP. 
    Additionally, we use classical statistical results to provide a thresholding mechanism that enjoys probabilistic false-alarm guarantees.
    Finally, we implement the IBM system with Simulink and Real Time Digital Simulator~(RTDS) to verify the effectiveness of the proposed method through simulations.
\end{abstract}

\section{Introduction}
In the past decade, IBM systems have gained popularity as power systems become increasingly complex and rely more on renewable energy sources~\cite{altaf2022microgrid}.
These microgrid systems help integrate renewable energy sources into power systems and regulate the amount of power supplied to customers to provide high-quality power and reduce energy costs. 
They can also operate independently and allow for local control of distributed generation, for example, when the main grid is unavailable due to blackouts or storms~\cite{zamani2012communication}.
This greatly increases the reliability of power systems.

Although IBM systems offer many benefits, they are susceptible to faults that can pose safety risks and damage equipment. 
However, the conventional protection strategy for power systems, such as overcurrent detection, is inefficient in detecting faults in IBM systems~\cite{lai2015comprehensive}.
This is because the fault current only slightly deviates from the nominal value due to a fault current limiter embedded in the inverter controller~\cite{karimi2019protection}.
The fault detection problem is more difficult when considering disturbances that cannot be completely decoupled and modeling uncertainties.
Therefore, developing an effective fault detection scheme for IBM systems in the presence of partially decoupled disturbances and modeling uncertainties remains a challenge, particularly when the output current is the only measurement.
In this paper, we focus on the detection of ground faults as they are the most common and problematic type of faults in IBM systems~\cite{loix2009protection}.

To address the fault detection problem for microgrid systems, researchers have developed several differential methods that rely on communication infrastructure between relays. 
These methods measure differences in the current symmetrical components~\cite{casagrande2013differential}, the energy content of current~\cite{samantaray2012differential}, the instantaneous current with comparative voltage~\cite{sortomme2009microgrid}, and the traveling wave polarities~\cite{liu2020fast} to detect faults.
Though these methods have shown effectiveness, relying on communication devices can reduce the reliability of systems, and it can be expensive and time-consuming to implement and maintain new equipment.
In addition to differential methods, active fault detection methods have emerged as another popular solution to fault detection for microgrid systems in recent years.
By introducing carefully designed input signals into the system, active fault detection methods can enhance the detectability of faults. 
In~\cite{karimi2008negative}, the authors inject a small negative-sequence current ($<3\%$) into the microgrid system and detect faults by using a signal processing technique to quantify the resulting negative-sequence voltage.
Most recently, to detect ground faults, the authors in~\cite{pirani2022optimal} provided an optimal input design method ensuring that the output sets of normal and faulty modes of an IBM system are separated with probabilistic guarantees. 
However, the injected input can degrade system performance, and it is unsuitable for online monitoring due to the intensive computation required for generating the input sequence.

In contrast to differential methods, fault detection methods based on residual generation are less dependent on the communication infrastructure and additional sensors. 
Moreover, residual generation-based methods are more suitable for online monitoring than active input design because they do not require continuous updates and have no impact on system performance.
In the fault detection field, residual generators are generally constructed using observer-based or \mbox{parity-space} methods~\cite{gao2015survey}. 
Researchers employ optimization techniques to determine the parameters of residual generators, such that the residuals are sensitive to faults while being robust against disturbances and uncertainties. Alternatively, decomposition techniques such as unknown input observers~(UIO)~\cite{gao2015unknown} can be utilized to decouple disturbances from residuals.
However, we found that the UIO approach could fail to satisfy the detectability condition when applied to IBM systems with a limited number of measurements. 

In~\cite{nyberg2006residual}, the authors proposed a parity-space-like approach to designing residual generators in the framework of linear differential-algebraic equations (DAE).
The derived residual generators can have a lower order than that of the system, thus reducing the computational complexity when dealing with large-scale systems. 
In addition, this framework provides design freedom in the sense that one can transform the design of residual generators into different optimization problems to obtain desired solutions based on specific requirements. For example, the authors in~\cite{esfahani2015tractable}
reformulated the robust fault detection filter design for nonlinear systems as a QP problem to decouple disturbances and minimize the effects of nonlinearity on residuals.
Based on this, results for attack detection~\cite{pan2021dynamic}, diagnosis of switched systems~\cite{dong2023multimode}, and multiple fault estimation~\cite{van2022multiple} have been developed in the DAE framework as well.  
We would like to emphasize that these methods~\cite{esfahani2015tractable,dong2023multimode,pan2021dynamic,van2022multiple} rely on an accurate system model and all consider disturbances that can be completely decoupled. 
However, in reality, modeling uncertainties are unavoidable and disturbances generally cannot be completely decoupled, all of which pose challenges to fault diagnosis tasks.

\textbf{Main contributions:} 
In this work, we take advantage of the DAE framework to design filters for ground fault detection in the IBM system. 
To the best of our knowledge, this is the first attempt to design fault detection filters within the DAE framework that enables real-time monitoring of ground faults in the IBM system with partially decoupled disturbances and modeling uncertainties. 
The contributions of this paper are summarized as follows: 
\begin{itemize}
\item {\bf Dynamic system modeling:} 
    We develop a unified \mbox{state-space} model for the IBM system in both normal mode and the presence of ground faults (Sections~\ref{subsec: fault-free model},~\ref{subsec: faulty model}). This model is further formulated in the DAE framework, which facilitates the design of robust fault detection filters.   

\item {\bf Data-assisted disturbance and uncertainty rejection:}
    To address partially decoupled disturbances and uncertainties, we borrow the idea from \cite[Approach (II)]{esfahani2015tractable} to reframe filter design as a QP problem. The reformulation enables us to decouple partial disturbances, mitigate the effects of non-decoupled disturbances and modeling uncertainties by training with data, and ensure fault sensitivity~(Theorem~\ref{thm: FD QP}).  
    Inspiring from \cite[Corollary 1]{van2021real}, we also obtain an approximate analytical solution to this QP problem with arbitrary accuracy (Corollary \ref{cor: appro sol}), allowing for online updates of filter parameters.
    
    \item {\bf Probabilistic false alarm certificate:}   
    Leveraging the classical Markov inequality, we further propose a threshold determination method along with probabilistic \mbox{false-alarm} guarantees (Proposition~\ref{prop:prob perform}).
    
    \item {\bf Validation through a high-fidelity simulator:} 
    To validate the effectiveness of the proposed diagnosis scheme, we test it on an IBM system constructed using Simulink and RTDS, which can effectively simulate the practical characteristics of smart grids.
\end{itemize}

The rest of the paper is organized as follows. 
The modeling of an IBM system and the problem formulation are presented in Section~\ref{sec: modeling}. 
In Section~\ref{sec:main results}, we provide the design method for the fault detection filter. 
In Section~\ref{sec:simulation}, we evaluate the effectiveness of the proposed approaches with simulations.
Finally, Section~\ref{sec:conslusion} concludes the paper with future directions. 

\textbf{Notation:} 
Sets~$\R (\R_+)$ and $\N$ denote all (positive) reals and non-negative integers, respectively. 
The space of~$n$ dimensional real-valued vectors is denoted by~$\R^n$.
For a vector~$v = [v_1,\dots,v_n]$, the infinity-norm of $v$ is~$\|v\|_\infty = \max_{i\in\{1,\dots,n\}}|v_i|$.
The diagonal operator is denoted by~$\textup{diag}(\cdot)$.
For two discrete-time signals~$s_1$ and~$s_2$ taking values in $\mathbb{R}^{n}$ with length~$T$, the $\mathcal{L}_{2}$ inner product is represented as $\langle s_1, s_2 \rangle:=\sum_{k=1}^{T} s_1^{\top}(k) s_2(k)$, and the corresponding norm~$\| s_1 \|_{\mathcal{L}_{2}} := \sqrt{\langle s_1, s_1 \rangle}$.
The notation~$\mathbf{0}_{m\times n}$ denotes a zero matrix with $m$ rows and $n$ columns. 
The identity matrix with an appropriate dimension is denoted by~$I$.
For a random variable~$\chi$, $\mPr[\chi]$ and $\mE[\chi]$ are the probability law and the expectation of $\chi$.

\section{Modeling and Problem Statement}\label{sec: modeling}
In this section, we present the state-space model of an IBM system and consider three-phase symmetrical ground faults. Then, we formulate the problem addressed in this work.

\subsection{System description}
An IBM generally consists of four components: the power supplier, the LCL filter, the controller, and the load, as shown in Figure~\ref{fig:Architecture}. 
Let us elaborate on the functions of each component. 
\begin{enumerate}
    \item \textbf {Power supplier}: 
    The power supply part provides power to the microgrid by following a reference voltage $v^*_i$ from the current controller. 
    It contains a distributed generator (DG) source and an inverter.
    In this work, we assume that (1) an ideal DG source is available, and (2) the inverter switching process can be neglected due to its high switching frequency. 
    Therefore, instead of modeling the generator and inverter, we can set the output voltage of the inverter directly to $v_i = v^*_i$.
    The real-time output current of the inverter is denoted by $i_l$. 
    As the single DG source supplies all power to the load, droop control is unnecessary, and the microgrid frequency $\omega$ is constant. This differs from \cite{pogaku2007modeling}, where multiple DG sources operate simultaneously.

    \item \textbf{LCL filter}: 
    The LCL filter is used to filter the harmonics produced by the inverter. It consists of two resistors $R_f$ and $R_c$, two inductors $L_f$ and $L_c$, and a capacitor $C_f$. 
    The signals $v_o$ and $i_o$ denote the grid-side voltage and the output current, respectively.
    
    \item \textbf{Controller}: 
    The control part keeps the grid-side voltage at some reference voltage~$v^*_o$, where~$v^*_o$ is determined by load demand and generation capacity. 
    This can be achieved through an inner current controller and an outer voltage controller, which are all PI controllers~\cite{leitner2017small}. 
    The outer voltage controller sets reference $i^*_l$ for the inner current controller. 
    The fault current limiter (FCL) is a saturation block that protects the microgrid from large fault currents. 
    
    \item \textbf {Load}: 
    The load denoted by~$R_{L}$ is purely resistive, and~$\Delta R_L$ is the unknown load change.
\end{enumerate}

\begin{figure}[t]
\centering
\includegraphics[scale=1.2]{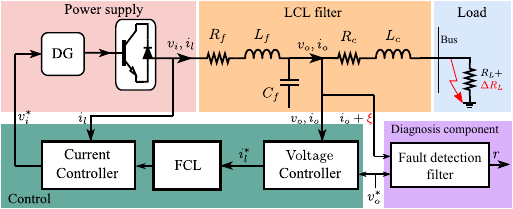}
\caption{Architecture of an IBM system with the diagnosis component.}
\label{fig:Architecture}
\end{figure}  

The mentioned voltages and currents are based on a three-phase system. 
We introduce the direct-quadrature ($dq$) transform to simplify the analysis.
Specifically, for a three-phase system with current~$i = \left[ i_a ~i_b ~i_c\right]^{\top}$ and voltage~$v = \left[ v_a ~v_b ~v_c\right]^{\top}$ in the $abc$ framework, the $dq$ transform projects~$i$ and~$v$ onto $dq$-axis, i.e., $i_{d q}=\mathbf{P}i, ~v_{d q}=\mathbf{P}v$, where~$i_{dq} = \left[ i_d ~i_q\right]^{\top}, ~v_{dq} = \left[ v_d ~v_q\right]^{\top}$. The projection matrix $\mathbf{P}$ is given by
\begin{equation*}
\mathbf{P}= \frac{2}{3}
\begin{bmatrix}
    \cos (\theta) & \cos \left(\theta-\frac{2 \pi}{3}\right) & \cos \left(\theta+\frac{2 \pi}{3}\right) \\
\sin (\theta) & \sin \left(\theta-\frac{2 \pi}{3}\right) & \sin \left(\theta+\frac{2 \pi}{3}\right)
\end{bmatrix},
\end{equation*}
in which $\theta$ is the constantly changing angle between the $d$-axis and the $a$-axis. 
We refer interested readers to~\cite{park1929two} for more details about the $dq$ transformation. 
For simplicity of expression, we add subscripts to indicate the variables after $dq$ transformation throughout the paper, e.g.,~$v \xrightarrow[]{dq} v_{dq}$, $i \xrightarrow[]{dq} i_{dq}$, and so forth.

\subsection{State-space model of the fault-free IBM system}\label{subsec: fault-free model}
To obtain the state-space model of the fault-free IBM system, we first model individual components of the microgrid including the voltage controller, the current controller, and the LCL filter in this subsection. 
Let us start with the voltage controller in the control component.
We transform~$v_o$,~$v_o^*$,~$i_o$ and~$i_l^*$ into the $dq$ framework, which are~$v_{odq},~v_{odq}^*,~i_{odq}$ and~$i_{ldq}^*$, respectively.
The cumulative error between~$v_{odq}$ and~$v^*_{odq}$ denoted by~$\phi_{dq} := [ \phi_d ~\phi_q]^{\top}$ can be written as
\begin{equation}
\label{eqa:phiDef}
\frac{\mathrm{d} \phi_{d}(t)}{\mathrm{d} t}=v_{o d}^{*}(t)-v_{o d}(t), \quad \frac{\mathrm{d} \phi_{q}(t)}{\mathrm{d} t}=v_{o q}^{*}(t)-v_{o q}(t).
\end{equation}
Considering that the voltage controller is a PI controller, we have the following relations
\begin{align}\label{eqa:voltageController}
\left\{ \begin{array}{l}
     i_{l d}^{*} (t) = F i_{o d}(t)-\omega C_{f} v_{o q}(t) + K^v_P \left(v_{o d}^{*}(t) - v_{o d}(t)\right)+K^v_I \phi_{d}(t), \\
     i_{l q}^{*}(t)= F i_{o q}(t)+\omega C_{f} v_{o d}(t)+K^v_P\left(v_{o q}^{*}(t)-v_{o q}(t)\right)+K^v_I \phi_{q}(t),
\end{array} \right.
\end{align}
where~$F$ is the feedforward coefficient,~$K^v_P$ and $K^v_I$ denote the proportional and integral gains of the voltage controller, respectively. From~\eqref{eqa:phiDef} and \eqref{eqa:voltageController}, we obtain the state-space model of the voltage controller
\begin{align}\label{eqa:ss4VoltageCtrl}
\left\{ \begin{array}{l}
     \dot{\phi}_{dq}(t) = B_{v 1} v_{o d q}^{*}(t) +B_{v 2} \begin{bmatrix}
         i_{l d q}(t) &v_{o d q}(t) &i_{o d q}(t)
     \end{bmatrix}^{\top},\\
    i_{l d q}^{*}(t)  = C_{v} \phi_{dq}(t) +D_{v 1} v_{o d q}^{*}(t) +D_{v 2}\begin{bmatrix}
        i_{l d q}(t)&  v_{o d q}(t)& i_{o d q}(t)
    \end{bmatrix}^{\top},
\end{array} \right.
\end{align}
where the matrices are 
\begin{align*}
    B_{v 1} &= \begin{bmatrix}
        1 & 0 \\
        0 & 1
    \end{bmatrix},
    ~B_{v 2} = \begin{bmatrix}
        0 & 0 & -1 & 0 & 0 & 0 \\
        0 & 0 & 0 & -1 & 0 & 0
    \end{bmatrix},
    ~C_v =\begin{bmatrix}
        K_I^v & 0 \\
        0 & K_I^v
        \end{bmatrix}, \\
    D_{v 1} &= \begin{bmatrix}
        K_P^v & 0 \\
        0 & K_P^v
    \end{bmatrix},
    ~D_{v 2} = \begin{bmatrix}
        0 & 0 & -K_P^v & -\omega C_{f} & F & 0 \\
        0 & 0 & \omega C_{f} & -K_P^v & 0 & F
    \end{bmatrix}.
\end{align*}

Similarly, one can obtain the state-space model of the current controller. 
Let us transform~$i_l$,~$i_l^*$ and~$v_{i}^*$ into the $dp$ framework, which are~$i_{ldq},~i^*_{ldq}$ and~$v_{idq}^*$, respectively. 
The cumulative error between~$i_{ldq}$ and~$i_{ldq}^*$ is denoted by~$\gamma_{dq} := [\gamma_d ~\gamma_q]^{\top}$, i.e.,
\begin{equation}
\label{eqa:gaDef}
\frac{\mathrm{d} \gamma_{d}(t)}{\mathrm{d} t} = i_{l d}^{*}(t) - i_{l d}(t), \quad \frac{\mathrm{d} \gamma_{q}(t)}{\mathrm{d} t} = i_{l q}^{*}(t)-i_{l q}(t).
\end{equation}
Then, the dynamics of the current controller follows 
\begin{align}\label{eqa:currentController}
\left\{ \begin{array}{l}
     v_{i d}^{*}(t) = -\omega L_{f} i_{l q}(t) + K^c_P\left(i_{l d}^{*}(t) - i_{l d}(t)\right)+K^c_I \gamma_{d}(t),\\
     v_{i q}^{*}(t) = \omega L_{f} i_{l d}(t) + K^c_P(i_{l q}^{*}(t) - i_{l q}(t))+K^c_I \gamma_{q}(t),
\end{array}\right.
\end{align}
where $K^c_P$ and $K^c_I$ denote the proportional and integral gains of the current controller, respectively. 
Based on~\eqref{eqa:gaDef} and~\eqref{eqa:currentController}, the state-space model of the current controller is given by
\begin{align}\label{eqa:ss4CurrentCtrl}
\left\{ \begin{array}{l}
     \dot{\gamma}_{d q}(t) = B_{c 1} i_{l d q}^{*}(t) + B_{c 2} \begin{bmatrix}
         i_{l d q}(t) & v_{o d q}(t) & i_{o d q}(t)
     \end{bmatrix}^{\top}, \\
     v_{i d q}^{*}(t) = C_{c} \gamma_{d q}(t) + D_{c 1} i_{l d q}^{*}(t)  +D_{c 2} \begin{bmatrix}
         i_{l d q}(t) & v_{o d q}(t) & i_{o d q}(t)
     \end{bmatrix}^{\top} ,
\end{array}\right.
\end{align}
where
\begin{align*}
    B_{c 1}&= \begin{bmatrix}
            1 & 0 \\
            0 & 1
            \end{bmatrix},
    ~B_{c 2} =\begin{bmatrix}
        -1 & 0 &\mathbf{0}_{1\times4} \\
        0 & -1 &\mathbf{0}_{1\times4}
        \end{bmatrix}, 
    ~C_c =\begin{bmatrix}
        K_I^c & 0 \\
        0 & K_I^c
        \end{bmatrix},\\
    D_{c 1}&=\begin{bmatrix}
        K_P^c & 0 \\
        0 & K_P^c
        \end{bmatrix}, 
    ~D_{c 2} =\begin{bmatrix}
        -K_P^c & -\omega L_{f}	&\mathbf{0}_{1\times4} \\
        \omega L_{f} & -K_P^c   &\mathbf{0}_{1\times4}
        \end{bmatrix}.
\end{align*}

For the LCL filter modeling, we transform the output voltage of the inverter~$v_i$ and the bus voltage~$v_b$ into the $dq$ framework, i.e.,~$v_{idq}$ and~$v_{bdq}$, respectively. The dynamics of the LCL filter follows
\begin{align*}
\left\{ \begin{array}{l}
    \dot{i}_{l d}(t) =\frac{-R_{f}}{L_{f}} i_{l d}(t)+\omega i_{l q}(t)+\frac{1}{L_{f}} v_{i d}(t)-\frac{1}{L_{f}} v_{o d}(t), \\
    \dot{i}_{l q}(t) =\frac{-R_{f}}{L_{f}} i_{l q}(t)-\omega i_{l d}(t)+\frac{1}{L_{f}} v_{i q}(t)-\frac{1}{L_{f}} v_{o q}(t), \\
    \dot{v}_{o d}(t) =\omega v_{o q}(t)+\frac{1}{C_{f}} i_{l d}(t)-\frac{1}{C_{f}} i_{o d}(t) , \\
    \dot{v}_{o q}(t) =-\omega v_{o d}(t)+\frac{1}{C_{f}} i_{l q}(t)-\frac{1}{C_{f}} i_{o q}(t) , \\
    \dot{i}_{o d}(t)=\frac{-R_{c}}{L_{c}} i_{o d}(t)+\omega i_{o q}(t)+\frac{1}{L_{c}} v_{o d}(t)-\frac{1}{L_{c}} v_{b d}(t), \\
    \dot{i}_{o q}(t)=\frac{-R_{c}}{L_{c}} i_{o q}(t)-\omega i_{o d}(t)+\frac{1}{L_{c}} v_{o q}(t)-\frac{1}{L_{c}} v_{b q}(t) .
\end{array}\right.
\end{align*}
Then, the state-space model of the LCL filter becomes
\begin{align} \label{eqa:ss4LCL}
\left[\begin{array}{c}
\dot{i}_{l d q}(t) \\
\dot{v}_{o d q}(t) \\
\dot{i}_{o d q}(t)
\end{array}\right] =A_l\left[\begin{array}{c}
i_{l d q}(t) \\
v_{o d q}(t) \\
i_{o d q}(t)
\end{array}\right]+
\left[\begin{array}{cc}
B_{l1} & B_{l2}
\end{array}\right]
\left[\begin{array}{c}
     v_{i d q}(t)   \\
     v_{b d q}(t)
\end{array}\right] \ ,
\end{align}
where the bus voltage~$v_{bdq} = \text{diag} (R_L+\Delta R_L,R_L+\Delta R_L)  i_{odq}$, and the matrices are
\begin{align*}
A_{l}= \begin{bmatrix}
-\frac{R_{f}}{L_{f}} & \omega & -\frac{1}{L_{f}} & 0 & 0 & 0 \\
-\omega & -\frac{R_f}{L_f} & 0 & -\frac{1}{L_{f}} & 0 & 0 \\
\frac{1}{C_{f}} & 0 & 0 & \omega & -\frac{1}{C_{f}} & 0 \\
0 & \frac{1}{C_{f}} & -\omega & 0 & 0 & -\frac{1}{C_{f}} \\
0 & 0 & \frac{1}{L_{c}} & 0 & -\frac{R_{c}}{L_{c}} & \omega \\
0 & 0 & 0 & \frac{1}{L_{c}} & -\omega & -\frac{R_{c}}{L_{c}}
\end{bmatrix},  
~B_{l 1}=\begin{bmatrix}
\frac{1}{L_{f}} & 0 \\
0 &\frac{1}{L_{f}}\\
\mathbf{0}_{4\times 1} &\mathbf{0}_{4\times 1}  
\end{bmatrix},
~B_{l 2}=\begin{bmatrix}
\mathbf{0}_{4 \times 1} &\mathbf{0}_{4 \times 1}\\
-\frac{1}{L_{c}} & 0 \\
0 & -\frac{1}{L_{c}}
\end{bmatrix}.
\end{align*}

Recall that~$v_i = v^*_i$, and thus~$v_{idq} = v_{idq}^*$. 
By combining the derived models~\eqref{eqa:ss4VoltageCtrl},~\eqref{eqa:ss4CurrentCtrl}, and~\eqref{eqa:ss4LCL}, we obtain the complete state-space model of the inverter-based microgrid system in the fault-free case
\begin{align}\label{eqa:ss4inverter}
   \left\{ \begin{array}{l}
        \dot{x} (t)  =A_h x (t) + B_h v_{odq}^{*}(t) + B_d d(t), \\
        i_{odq} (t) =C x(t),
   \end{array}\right.
\end{align}
where~$x(t) = \begin{bmatrix}\phi^{\top}_{dq}(t) &\gamma^{\top}_{dq}(t) &i^{\top}_{ldq}(t)  & v^{\top}_{odq}(t) & i^{\top}_{odq}(t) \end{bmatrix}^{\top}$ is the augmented state of the microgrid system and~$d(t) = i_{odq}(t) \Delta R_L$ denotes the disturbance. The system matrices~$A_h,~B_h$, and~$C$ are given by
\begin{align*}
    &A_h=\begin{bmatrix}
    \mathbf{0}_{2\times 2} & \mathbf{0}_{2\times2} & B_{v 2} \\
    B_{c 1} C_{v} & \mathbf{0}_{2\times2} & B_{c 1} D_{v 2}+B_{c 2} \\
    B_{l1} D_{c 1} C_{v} & B_{l1} C_{c} &A_{h33}
    \end{bmatrix},
    ~B_h = \begin{bmatrix}
    B_{v1} \\ B_{c1}D_{v1} \\ B_{l1} D_{c1} D_{v1}
    \end{bmatrix},
    ~B_d = \begin{bmatrix}
        \mathbf{0}_{8 \times 1} &\mathbf{0}_{8 \times 1}\\ -\frac{1}{L_c} &0 \\0 &-\frac{1}{L_c}
    \end{bmatrix},\\
    &C= \begin{bmatrix}
        \mathbf{0}_{2 \times 8} &I
    \end{bmatrix},
\end{align*}
where~$A_{h33} = A_{l}+ B_{l1}\left(D_{c 1} D_{v 2}+D_{c 2}\right)+
B_{l2} \begin{bmatrix} R_L &0\\ 0 &R_L \end{bmatrix} 
\begin{bmatrix}
    \mathbf{0}_{2 \times 4} & I
\end{bmatrix}$. 

We would like to highlight that the number of states is~$10$, while we only have $2$ measurements, i.e.,~$i_{od}$ and $i_{oq}$.
We take into account disturbances resulting from unknown load changes, which are commonly observed in microgrid systems and typically manifest as random step signals~\cite{kahrobaeian2012interactive,singh2017distributed,xu2017robust}.
Therefore, in our study, we assume that~$d$ is a step signal taking random values uniformly within the known range~$[d_{lb},d_{ub}]$  where~$d_{lb},d_{ub} \in \mathbb{R}^2$.
Additionally, since the dimension of the measurement signal is equal to that of the disturbance,~$d$ cannot be fully decoupled~\cite[Chapter 6]{ding2008model}, leading to challenges in fault detection.  
To address this issue, we split~$B_d = [\hat{B}_d ~\check{B}_d]$ and define~$d = [\hat{d} ~\check{d}]^{\top}$, where~$\hat{d}$ and~$\check{d}$ represent the decoupled and non-decoupled parts, respectively. 

\begin{Rem}[Disturbance decoupling condition]\label{rem: decouple condition}
    Let~$\T_{d i_{odq}}$ denote the transfer function from the disturbance~$d$ to the measurement~$i_{odq}$, and~$\text{Rank}(\T_{d i_{odq}})$ denotes the rank of~$\T_{d i_{odq}}$. 
    According to \cite[Chapter 6]{ding2008model},~$d$ can be decoupled from~$i_{odq}$ if the number of unknown inputs is smaller than the number of sensors, i.e.,~$\text{Rank}(\T_{d i_{odq}}) < 2$. 
    Therefore,~$d$ can be decoupled from~$i_{odq}(t)$ when~$d(t)$ is a one-dimensional signal but not for a two-dimensional (and higher-dimensional) disturbance.
\end{Rem}

\subsection{State-space model of the IBM system with ground faults}\label{subsec: faulty model}
We consider three-phase symmetrical ground faults which can cause a short circuit and a sharp increase in currents. 
We know that after ground faults occur: (1) the bus voltage~$v_{bdq}=0$ because of the short circuit; and (2) the output of the voltage controller~$i_{ldq}^*$ saturates to a constant value~$\tau_{dq}$ immediately, i.e.,~$i_{ldq}^*(t) = \tau_{dq}$ for~$t \geq t_f$, where $t_f$ denotes the time instant when ground faults occur. 
Therefore, the state-space model of the current controller~\eqref{eqa:ss4CurrentCtrl} in the faulty mode becomes
\begin{align}\label{eqa:ss4CurrentCtrlFlt}
\left\{ \begin{array}{l}
     \dot{\gamma}_{dq}(t)= B_{c1} \tau_{dq} + B_{c 2} \begin{bmatrix} i_{l d q}(t) &v_{o d q}(t) &i_{o d q}(t)
    \end{bmatrix}^{\top}, \\
     v_{i dq}^{*}(t) = C_{c} \gamma_{dq}(t)+D_{c 1} \tau_{dq} +D_{c 2}\begin{bmatrix} i_{l d q}(t) &v_{o d q}(t) &i_{o d q}(t) \end{bmatrix}^{\top} .
     \end{array}\right.
\end{align}
Based on~\eqref{eqa:ss4VoltageCtrl},~\eqref{eqa:ss4LCL}, and~\eqref{eqa:ss4CurrentCtrlFlt}, the state-space model of the inverter-based microgrid system with ground faults can be written as
\begin{align}\label{eq:ss4faultySys}
\left\{ \begin{array}{l}
     \dot{x}(t) = A_{uh} x(t) + B_{uh1} v_{odq}^{*}(t) + B_{uh2} \tau_{dq},\\
     i_{odq}(t) = Cx(t), 
\end{array}\right.
\end{align}
where the matrices~$A_f,~B_{uh1}$, and~$B_{uh2}$ are 
\begin{align*}
     A_{uh}=\begin{bmatrix}
    \mathbf{0}_{2\times 2} & \mathbf{0}_{2\times2} & B_{v 2} \\
    \mathbf{0}_{2\times 2} & \mathbf{0}_{2\times2} & B_{c 2} \\
    \mathbf{0}_{6\times 2} & B_{l1} C_{c} &A_l + B_{l1}D_{c2}
    \end{bmatrix},
    ~B_{uh1} = \begin{bmatrix}
    B_{v1} \\ \mathbf{0}_{2\times 2} \\ \mathbf{0}_{6\times 2}
    \end{bmatrix},
    ~B_{uh2} = \begin{bmatrix}
    \mathbf{0}_{2\times 2} \\ B_{c1} \\ B_{l1}D_{c1}
    \end{bmatrix}.
\end{align*}
Note that the disturbance~$d(t)$ has no effect on the system in the fault scenario because of the short circuit.

To streamline the representation of the normal and faulty models~\eqref{eqa:ss4inverter} and~\eqref{eq:ss4faultySys}, we introduce a signal~$f(t)$ to indicate the occurrence of ground faults, i.e.,
\begin{align*}
    \left\{ \begin{array}{ll}
         f(t) = 0,  &\text{no faults}, \\
         f(t) = 1,  &\text{faults happen}.
    \end{array}\right.
\end{align*}
With $f(t)$, we can express the normal and faulty models in the following unified form
\begin{align}\label{eq:ss4unifiedmodel}
    \left\{ \begin{array}{l}
     \dot{x}(t) =  \mathcal{A}(f(t)) x(t) + \mathcal{B}_u(f(t)) u(t)+ \mathcal{B}_d(f(t)) d(t),\\
      y(t) = Cx(t),
    \end{array}\right.
\end{align}
where~$u(t) = [{v_{odq}^{*}(t)}^{\top} ~\tau^{\top}_{dq}]^{\top}$ consists of the known input signals, $y(t) = i_{odq}(t)$ is the output. The dimensions of~$x(t),~u(t),~d(t)$ and~$y(t)$ are denoted by~$n_x,~n_u,~n_d$, and~$n_y$, respectively. The system matrices are
\begin{align*}
    &\mathcal{A} (f(t)) = A_h+f(t)(A_{uh}-A_h),
   ~\mathcal{B}_u(f(t))=[B_h+f(t)(B_{uh1}-B_h) \quad f(t)B_{uh2}], \\
   ~&\mathcal{B}_d(f(t)) = [\hat{\mathcal{B}}_d(f(t)) ~\check{\mathcal{B}}_d(f(t))] = (1-f(t))[\hat{B}_d ~\check{B}_d].
\end{align*}

\begin{Rem}[Discretization]
Considering that the discrete-time samplings of data are used in the realistic framework, we discretize the continuous-time state-space model~\eqref{eq:ss4unifiedmodel} when designing the fault diagnosis scheme.
In what follows, all signals are presented in the discrete-time form.
For convenience, we use the same notation for system matrices in both the continuous and discrete representations.
\end{Rem}

\subsection{Problem statement}
The objective of this work is to detect the occurrence of ground faults in the IBM system using known signals~$u$ and~$y$.
Our proposed scheme is to design a residual generator denoted by a linear transfer function~$\F$, whose output is a \mbox{scalar-valued} signal~$r$ (called residual). 
The structure is illustrated in the diagnosis component of Fig.~\ref{fig:Architecture}.
This residual~$r$ serves as an indicator of ground faults. 
Ideally, in the absence of ground faults, $r$ should remain close to zero in the presence of disturbances. 
However, $r$ can exhibit a significant increase to facilitate detection when ground faults occur.

Additionally, in real-world application scenarios, the measurement~$\tilde{y}$, which is directly fed to the residual generator, may deviate from the output of the mathematical model~$y$ due to simplifications made during the modeling process.  
Therefore, in addition to disturbances, it is essential to ensure that~$r$ remains robust against discrepancies induced by modeling uncertainties, i.e.,~$\xi = \tilde{y} - y$.
Based on the above analysis, to obtain the desired residual behavior, two questions arise naturally.
How can we design~$\F$ to: 
\textit{
\begin{enumerate}
    \item Mitigate effects of~$d$ and~$\xi$ on~$r$ in the normal mode;
    \item Enhance fault sensitivity of $r$ in the faulty mode.
\end{enumerate} }

In this work, we provide a design method of the filter~$\F$ in the DAE framework to fulfill the two requirements. 
To this end, let us introduce the shift operator~$\mfq$, i.e.,~$\mfq x(k) = x(k+1)$, and transform the discrete-time version of the unified \mbox{state-space} model~\eqref{eq:ss4unifiedmodel} into
\begin{align}\label{eq:DAE model}
    H(\mfq,f) [X] + L(f) [Y] + E(f) [\check{d}] = 0,
\end{align}
where~$X = [x^{\top} ~\hat{d}]^{\top}$ and $Y = [y^{\top} ~u^{\top}]^{\top}$. 
The matrices~$H(\mfq,f)$ is a polynomial matrix in~$\mfq$, which is 
\begin{align*}
     &H(\mfq,f) =\mfq H_1 + H_0(f) =  \begin{bmatrix}
         -\mfq I+ \mathcal{A}(f)  &\hat{\mathcal{B}}_d(f)\\
         C                        &\mathbf{0}
     \end{bmatrix}, \\ 
     &H_1 = \begin{bmatrix}
         -I &\mathbf{0} \\ \mathbf{0} &\mathbf{0} 
     \end{bmatrix},
     \quad H_0(f) = \begin{bmatrix}
          \mathcal{A}(f) &\hat{\mathcal{B}}_d(f) \\ C &\mathbf{0}
     \end{bmatrix}.
\end{align*}
The expressions of~$L(f)$ and~$E(f)$ are
\begin{align*}
    L(f) = \begin{bmatrix}
         \mathbf{0}   &\mathcal{B}_u(f)\\
         -I  &\mathbf{0}
     \end{bmatrix},
    \quad E(f) = \begin{bmatrix}
        \check{\mathcal{B}}_d(f) \\ \mathbf{0}
    \end{bmatrix}.
\end{align*}
We define~$L_0 :=L(0)$,~$L_1 :=L(1)$, and~$E_0 := E(0)$ for simplicity of expression.

The fault detection filter~$\F$ is defined in the form of
\begin{align}\label{eq:Filter}
    \F(\mfq) = \frac{N(\mfq) L_0}{a(\mfq)} ,
\end{align}
where the numerator~$N(\mfq)$ is a polynomial row vector, i.e.,~$N(\mfq) = \sum^{d_N}_{i=0}N_i \mfq^i$, $N_i \in \R^{1 \times (n_x+n_y)}$, $d_N$ is the degree of~$N(\mfq)$. 
The denominator~$a(\mfq)$ is a polynomial with a degree larger than~$d_N$ and all roots inside the unit circle so that the derived filter is strictly proper and stable. 
For simplicity of design, we fix $d_N$ and $a(\mfq)$ and only design the coefficients of~$N(\mfq)$. 
It is worth pointing out that~$a(\mfq)$ can be chosen up to the user and specific requirements, e.g., noise sensitivity and dynamic performance, which will be our future research. 

By setting~$f=0$ and multiplying from the left-hand side of~\eqref{eq:DAE model} by~$a^{-1}(\mfq)N(\mfq)$, we obtain the residual~$r$ in the normal mode as
\begin{align}\label{eq:healthresidual}
    r &= \frac{N(\mfq) L_0}{a(\mfq)}  \begin{bmatrix}
        \tilde{y} \\ u
    \end{bmatrix}  \notag\\ 
    &= -\frac{N(\mfq) H(\mfq,0)}{a(\mfq)} [X] - \frac{N(\mfq)E_0}{a(\mfq)}[\check{d}]  + \frac{N(\mfq) L_0}{a(\mfq)}  [\bar{\xi}], 
\end{align}
where~$\bar{\xi} = [\xi^{\top} ~\mathbf{0}^{\top}]^{\top}$ as we use the practical measurement $\tilde{y}$ instead of~$y$ here.

When ground faults happen, i.e.,~$f=1$, DAE model~\eqref{eq:DAE model} becomes~$H(q,1)[X]+L_1[Y] = 0$ as~$E(1) = 0$. 
It holds that~$Y = -L^{\dag}_1 H(q,1) [X]$, where~$L^{\dag}_1$ is the left inverse of~$L_1$ and it always exist as~$L_1$ has full-column rank. 
The residual~$r$ in the faulty mode becomes
\begin{align}\label{eq:faultresidual}
    r &=\frac{N(\mfq) L_0}{a(\mfq)}  \begin{bmatrix}
        \tilde{y} \\ u
    \end{bmatrix} \notag \\
    &=  -\frac{N(\mfq) L_0 L^{\dag}_1 H(q,1)}{a(\mfq)}  [X] + \frac{N(\mfq) L_0}{a(\mfq)}  [\bar{\xi}].
\end{align}
\noindent Since all the entities in~$a^{-1}(\mfq)N(\mfq)L_0[\tilde{y}^{\top} ~u^{\top}]^{\top}$ are known, it can be used to generate the residual. 
The second line of~\eqref{eq:healthresidual} and~\eqref{eq:faultresidual} characterizes the mapping relations between the unknown signals~$x,d,\xi$ and the residual~$r$ in the normal and faulty modes, respectively, based on which we can design~$N(\mfq)$ for different diagnosis purposes.

Recall the two design requirements. To ensure a sufficiently small residual, we decouple~$X$ from~$r$ in the steady state when there is no fault, i.e.,
\begin{subequations}
\begin{align}
        N(\mfq) H(\mfq,0) \big|_{\mfq = 1} = 0, \label{eq:FullDecoupDist1}
\end{align}
Furthermore, we let the transfer function from~$X$ to $r$ remain nonzero to guarantee fault sensitivity in the faulty mode, i.e.,
\begin{align}
    N(\mfq) L_0 L^{\dag}_1 H(q,1)\big|_{\mfq = 1} \neq 0. \label{eq:FullDecoupDist2}
\end{align}

We also aim to mitigate the effects of~$\xi$ on $r$, namely, the last term in~\eqref{eq:healthresidual} and~\eqref{eq:faultresidual}. 
Inspired by the approach in~\cite{esfahani2015tractable}, we tackle this problem from a data-driven perspective by training the filter with historical data of $\xi$ to enhance its robustness.
To elaborate, we can obtain $m \in \N$ instances of output differences, i.e.,~$\{\xi_1,\dots,\xi_m\}$, by running the actual system and the mathematical model simultaneously.
For each instance~$\xi_i = [\xi_i(0),\xi_i(1)\dots,\xi_i(T)]$ with~$T\in\N$, we define its contribution to~$r$ as
\begin{align*}
    r_{\xi_i} = \frac{N(\mfq)L_0}{a(\mfq)}[\bar{\xi}_i], \quad \bar{\xi}_i = [\xi_i^{\top} ~\mathbf{0}^{\top}]^{\top}.
\end{align*}

Then, we can suppress the average effects of $\xi$ by constraining the $\mathcal{L}_2$ norm of~$r_{\xi_i}$ for all~$i \in \{1,\dots,m\}$, i.e., 
\begin{align}\label{eq:Suppress rxi}
    \frac{1}{m}\sum^m_{i=1}\|r_{\xi_i}\|^2_{\mathcal{L}_2} = \frac{1}{m}\sum^m_{i=1}\left\|\frac{N(\mfq) L_0}{a(\mfq)}  [\bar{\xi}_i] \right\|^2_{\mathcal{L}_2}  \leq  \beta_1, 
\end{align}
where~$\beta_1 \in \R_+$. 
We show later the approach to constructing~$\|r_{\xi_i}\|^2_{\mathcal{L}_2}$ with a combination of the system model and historical data. 
For the non-decoupled disturbance~$\check{d}$ in~\eqref{eq:healthresidual}, we choose the same solution as above and let
\begin{align}\label{eq:Suppress rdi}
    \frac{1}{m}\sum^m_{i=1}\|r_{\check{d}_{i}}\|^2_{\mathcal{L}_2} = \frac{1}{m}\sum^m_{i=1}\left\|\frac{N(\mfq) E_0}{a(\mfq)}  [\check{d}_{i}] \right\|^2_{\mathcal{L}_2}  \leq  \beta_2,
\end{align}
\end{subequations}
where~$\beta_2 \in \R_+$, $\check{d}_{i}$ for~$i \in \{1,\dots,m\}$ is an instance of~$\check{d}$, and~$r_{\check{d}_{i}}$ denotes the contribution of~$\check{d}_{i}$ to $r$. 

\textbf{Problem.} (Data-assisted robust fault detection filter design)
\textit{
Consider the state-space model of the IBM system~\eqref{eq:ss4unifiedmodel} with three-phase symmetrical ground faults. Given $m$ instances of output discrepancies~$\xi_i$ and non-decoupled disturbances~$\check{d}_i$, find a fault detection filter~$\F$ in the form of~\eqref{eq:Filter} that satisfies conditions~\eqref{eq:FullDecoupDist1},~\eqref{eq:FullDecoupDist2},~\eqref{eq:Suppress rxi}, and~\eqref{eq:Suppress rdi}.
}

 \section{Main Results}\label{sec:main results}
In this section, we present the design method of the fault detection filter and the determination method of the threshold.

\subsection{Filter design}
Let us start by constructing~$\|r_{\xi_i}\|^2_{\mathcal{L}_2}$ mentioned above.
Note that the response of the $j$-th element of~$\xi_i$, i.e.,~$\xi_i(j)$, can be computed by 
\begin{align*}
    \left[r_{\xi_i(j)}(0),r_{\xi_i(j)}(1),\dots,r_{\xi_i(j)}(T)\right] = N(\mfq) L_0 \bar{\xi}_i(j) \bar{\ell}_j ,
\end{align*}
where~$\bar{\ell}_j = [\overbrace{0,\dots,0}^{j},\ell(0),\ell(1),\dots,\ell(T-j)]$ and $\ell(k)$ for~$k \in \mathbb{N}$ is the value of the discrete-time unit impulse response of~$a^{-1}(\mfq)$ at the time instance $k$.
By summing up the response of~$\xi_i(j)$ for~$j \in \{0,\dots,T-d_N\}$, we obtain
\begin{align}\label{eq:rd seq}
    \left[r_{\xi_i}(0),r_{\xi_i}(1),\dots,r_{\xi_i}(T)\right] 
    &= N(\mfq)L_0 \sum^{T-d_N}_{j=0} \bar{\xi}_i(j) \bar{\ell}_j \notag\\
    =&\bar{N} \bar{L}_0
    \begin{bmatrix}
        I \\ \mfq I \\ \vdots \\ \mfq^{d_N} I
    \end{bmatrix}
    [\bar{\xi}_i(0),\dots,\bar{\xi}_i(T-d_N)] 
    \begin{bmatrix}
      \bar{\ell}_0  \\ \vdots \\ \bar{\ell}_{T-d_N}
    \end{bmatrix}, 
\end{align}
where~$\bar{N} = [N_0,N_1\dots,N_{d_N}],~\bar{L}_0 = \textup{diag}(L_0,\dots,L_0)$ according to the multiplication rule of polynomial matrices~\cite[Lemma 4.2]{esfahani2015tractable}. 
Recall that~$\mfq$ is a time shift operator, i.e.,~$\mfq \bar{\xi}_i(k) = \bar{\xi}_i(k+1)$. The equation~\eqref{eq:rd seq} becomes
\begin{align}\label{eq:rd compact form}
    \left[r_{\xi_i}(0), ~r_{\xi_i}(1),~\dots,~r_{\xi_i}(T)\right] 
    = \bar{N} \bar{L}_0 \begin{bmatrix}
        \bar{\xi}_i(0) &\dots &\bar{\xi}_i(T-d_N) \\
        \bar{\xi}_i(1) &\dots &\bar{\xi}_i(T-d_N+1) \\
        \vdots &\ddots &\vdots \\
        \bar{\xi}_i(d_N) &\dots &\bar{\xi}_i(T)
    \end{bmatrix}
     \begin{bmatrix}
      \bar{\ell}_0  \\ \vdots \\ \bar{\ell}_{T-d_N}
    \end{bmatrix} 
    = \bar{N} \bar{L}_0 \Xi_i \Gamma.
\end{align}
To ensure the existence of~$\Xi_i$, we assume that the length of data~$T$ is greatly larger than~$d_N+1$, i.e.,~$T \gg d_N+1$.
With~\eqref{eq:rd compact form}, $\|r_{\xi_i}\|^2_{\mathcal{L}_2}$ is further formulated into
\begin{align}\label{eq:rd QP}
    \|r_{\xi_i}\|^2_{\mathcal{L}_2} = \bar{N} \Phi_i \bar{N}^{\top}, \quad \Phi_i = \bar{L}_0 \Xi_i \Gamma (\bar{L}_0 \Xi_i \Gamma)^{\top}.
\end{align}
It is worth emphasizing that~$\Phi_i$ is positive semi-definite since~$\|r_{\xi_i}\|^2_{\mathcal{L}_2} = \bar{N} \Phi_i \bar{N}^{\top} \geq 0$ for all non-zero~$\bar{N}$. 
Similarly, we obtain the signature matrix for~$\check{d}_{i}$, which is,
\begin{align*}
\|r_{\check{d}_i}\|^2_{\mathcal{L}_2} = \bar{N} \Psi_i \bar{N}^{\top}, \quad \Psi_i = \bar{E}_0 \check{D}_{i}\Gamma (\bar{E}_0 \check{D}_{i}\Gamma)^{\top}.
\end{align*}
The construction of~$\bar{E}_0$ and~$\check{D}_{i}$ is similar to that of~$\bar{L}_0$ and~$\Xi_i$.

Now, we can present the design method of the ground fault detection filter for the IBM system in the following theorem.

\begin{Thm}[Filter design: QP]
\label{thm: FD QP}
Consider the unified state-space model of the IBM system~\eqref{eq:ss4unifiedmodel} and the structure of the fault detection filter in~\eqref{eq:Filter}. 
Given the degree~$d_N$, a stable~$a(\mfq)$, and $m$ instances of output discrepancies~$\xi_i$ and non-decoupled disturbances~$\check{d}_{i}$, conditions~\eqref{eq:FullDecoupDist1},~\eqref{eq:FullDecoupDist2},~\eqref{eq:Suppress rxi}, and~\eqref{eq:Suppress rdi} in Problem are satisfied by solving the following optimization problem
\begin{align} \label{eq:QP}
    \min_{\bar{N}} ~&\bar{N} (\bar{\Phi}+\bar{\Psi})\bar{N}^{\top} -\|\bar{N} \bar{L} \bar{H}(1)\bar{I}\|_{\infty}  \notag\\
    \textup{s.t.} ~&\bar{N} \bar{H}(0)\bar{I} = 0, 
\end{align}
where~$\bar{\Phi} = \frac{1}{m}\sum^{m}_{i=1} \Phi_i$,~$\bar{\Psi} = \frac{1}{m}\sum^{m}_{i=1} \Psi_i,~\bar{L} = \textup{diag}\underbrace{(L_0 L^{\dag}_1, ~\dots, ~L_0 L^{\dag}_1)}_{d_N+1},~\bar{I} = [\underbrace{I,\dots,I}_{d_N+2}]^{\top}$, and
\begin{align*}
    \bar{H}(f) = \begin{bmatrix}
        H_0(f) &H_1 &0 &\dots &0\\
        0 &H_0(f) &H_1 &0 &\vdots \\
        \vdots &0 &\ddots &\ddots &0\\
        0 &\dots &0 &H_0(f) &H_1
    \end{bmatrix}, ~f\in\{0,1\}.
\end{align*} 
 \end{Thm}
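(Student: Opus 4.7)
The four conditions \eqref{eq:FullDecoupDist1}--\eqref{eq:Suppress rdi} are posed on the polynomial row $N(\mfq)$, whereas the QP \eqref{eq:QP} is posed on its coefficient vector $\bar{N} = [N_0, \dots, N_{d_N}]$. My plan is to rewrite each condition in terms of $\bar{N}$ and match it term-by-term with the corresponding component of \eqref{eq:QP}. The data-driven parts \eqref{eq:Suppress rxi} and \eqref{eq:Suppress rdi} are essentially already handled by identity \eqref{eq:rd QP} and its $\check{d}$ analogue, so the only genuine algebraic work is translating the two ``polynomial-at-$\mfq = 1$'' conditions into block-matrix form.

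The key tool is the following identity, valid for any constant matrix $A$ of compatible dimensions and any $f \in \{0, 1\}$:
\begin{align*}
    N(\mfq)\, A\, H(\mfq, f)\Big|_{\mfq = 1} \;=\; \bar{N}\, \bar{A}\, \bar{H}(f)\, \bar{I},
\end{align*}
where $\bar{A} = \mathrm{diag}(A, \dots, A)$ has $d_N + 1$ blocks. I would verify this by direct expansion using $H(\mfq, f) = \mfq H_1 + H_0(f)$ and $N(\mfq) = \sum_{i=0}^{d_N} N_i \mfq^i$: the staircase structure of $\bar{H}(f)$, with $H_0(f)$ on the block-diagonal and $H_1$ on the block-superdiagonal, is set up precisely so that right-multiplication by $\bar{I}$ (summing all $d_N + 2$ column-blocks) yields $\sum_{i=0}^{d_N} N_i A (H_1 + H_0(f))$, which coincides with the left-hand side. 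Setting $A = I$ turns \eqref{eq:FullDecoupDist1} into the equality constraint $\bar{N}\bar{H}(0)\bar{I} = 0$ of \eqref{eq:QP}. Setting $A = L_0 L_1^{\dag}$ (so that $\bar{A} = \bar{L}$) turns the fault-sensitivity quantity in \eqref{eq:FullDecoupDist2} into $\bar{N}\bar{L}\bar{H}(1)\bar{I}$; the term $-\|\bar{N}\bar{L}\bar{H}(1)\bar{I}\|_{\infty}$ in the objective then drives this vector away from zero, securing \eqref{eq:FullDecoupDist2}.

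For the robustness side I would invoke \eqref{eq:rd QP} directly, which gives $\|r_{\xi_i}\|^2_{\mathcal{L}_2} = \bar{N}\Phi_i \bar{N}^{\top}$, together with its analogue $\|r_{\check{d}_i}\|^2_{\mathcal{L}_2} = \bar{N}\Psi_i \bar{N}^{\top}$. Averaging over $i$ and adding gives
\begin{align*}
    \bar{N}(\bar{\Phi}+\bar{\Psi})\bar{N}^{\top} \;=\; \frac{1}{m}\sum_{i=1}^{m}\Bigl(\|r_{\xi_i}\|^2_{\mathcal{L}_2} + \|r_{\check{d}_i}\|^2_{\mathcal{L}_2}\Bigr),
\end{align*}
so minimizing this quadratic form is exactly minimizing the sum of the left-hand sides of \eqref{eq:Suppress rxi} and \eqref{eq:Suppress rdi}. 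The optimal value thus supplies the tightest admissible $\beta_1, \beta_2$.

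The main obstacle is the bookkeeping in the block identity above: one must verify that the staircase pattern of $\bar{H}(f)$ aligns correctly with both the coefficient indexing $\bar{N} = [N_0, \dots, N_{d_N}]$ and the all-ones block vector $\bar{I}$ of length $d_N + 2$. Once the indices are reconciled --- this is essentially the polynomial-matrix multiplication rule of \cite[Lemma 4.2]{esfahani2015tractable} --- the theorem reduces to matching the three ingredients of \eqref{eq:QP} (equality constraint, quadratic penalty, $\infty$-norm reward) with the four design requirements, which is routine.
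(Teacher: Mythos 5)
Your proposal is correct and takes essentially the same route as the paper's proof: your block identity $N(\mfq)\,A\,H(\mfq,f)\big|_{\mfq=1} = \bar{N}\bar{A}\bar{H}(f)\bar{I}$, verified via the staircase structure, is precisely the polynomial-matrix multiplication rule the paper invokes (from \cite[Lemma 4.2]{esfahani2015tractable}) to turn \eqref{eq:FullDecoupDist1} into the constraint $\bar{N}\bar{H}(0)\bar{I}=0$ and \eqref{eq:FullDecoupDist2} into the $\infty$-norm reward, while the quadratic term is matched to \eqref{eq:Suppress rxi}--\eqref{eq:Suppress rdi} through \eqref{eq:rd QP} exactly as in the paper. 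No gaps.
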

 
\begin{proof}
According to the multiplication rule of polynomial matrices, we have
\begin{align*}
    &N(\mfq) H(\mfq,0) = \bar{N} \bar{H}(0) [I, ~\mfq I, ~\dots, ~\mfq^{d_N+1} I]^{\top},\\
    &N(\mfq) L_0 L^{\dag}_1 H(q,1) = \bar{N}\bar{L}\bar{H}(1) [I, ~\mfq I, ~\dots, ~\mfq^{d_N+1} I]^{\top}. 
\end{align*}
One can see from the first equality that, by letting~$\mfq = 1$, ~$\bar{N}\bar{H}(0)\bar{I}=0$ directly implies condition~\eqref{eq:FullDecoupDist1}.
For~\eqref{eq:FullDecoupDist2}, we let coefficients of~$N(\mfq) L_0 L^{\dag}_1 H(q,1)$ be nonzero by maximizing~$\|\bar{N} \bar{L} \bar{H}(1)\bar{I}\|_{\infty}$ in the objective function, such that~\eqref{eq:FullDecoupDist2} is satisfied.
The first term in the objective function, i.e.,~$\bar{N} (\bar{\Phi} + \bar{\Psi} ) \bar{N}^{\top}$, relates to conditions~\eqref{eq:Suppress rxi} and~\eqref{eq:Suppress rdi}, which ensures that the average effects of different instances of~$\xi$ and~$\check{d}$ on $r$ are bounded. 
The derivation process of the quadratic form of~$\|r_{\xi_i}\|^2_{\mathcal{L}_2}$ and~$\|r_{\check{d}_{i}}\|^2_{\mathcal{L}_2}$ is presented in~\eqref{eq:rd seq}-\eqref{eq:rd QP}. 
This completes the proof.
\end{proof}

Note that~$\bar{N} \bar{L} \bar{H}(1)\bar{I}$ is a row vector with $(n_x+1)$ columns.
For a positive scalar~$\zeta$, $\|\bar{N} \bar{L} \bar{H}(1)\bar{I}\|_{\infty} \geq \zeta$ holds if and only if~$\bar{N} \bar{L} \bar{H}(1)\bar{I} e_i \geq \zeta$ or $\bar{N} \bar{L} \bar{H}(1)\bar{I} e_i \leq -\zeta$, where~$e_i$ is a $(n_x+1)$-dimensional column vector with only the~$i$-th element be~$1$ and the rest are zero, i.e.,~$e_i = [0, \dots,1,\dots,0]^{\top}$. 
Moreover, it is easy to check that if~$\bar{N}^*$ is a solution to~\eqref{eq:QP}, so is~$-\bar{N}^*$. 
Additionally,~$\Phi_i$ and~$\Psi_i$ are positive semi-definite.
Therefore,~\eqref{eq:QP} can be viewed as a set of $(n_x+1)$ QP problems by replacing the term~$\|\bar{N} \bar{L} \bar{H}(1)\bar{I}\|_{\infty}$ with~$\bar{N} \bar{L} \bar{H}(1)\bar{I} e_i$ (or $-\bar{N} \bar{L} \bar{H}(1)\bar{I} e_i$), and thus is convex and tractable.

\begin{Rem}[Feasibility analysis]
    It holds that~$(d_N+1)(n_x+n_y) = \textup{Rank}(\bar{H}(0)\bar{I}) + \textup{Null}(\bar{H}(0)\bar{I})$ based on Rank Plus Nullity theorem, where~$\textup{Rank}(\bar{H}(0)\bar{I})$ and~$\textup{Null}(\bar{H}(0)\bar{I})$ denote the rank and the left null space dimension of~$\bar{H}(0)\bar{I}$, respectively. 
    Thus, the equality constraint in~\eqref{eq:QP} is feasible when~$(d_N+1)(n_x+n_y) > \textup{Rank}(\bar{H}(0)\bar{I})$, i.e.,~$\textup{Null}(\bar{H}(0)\bar{I})\neq 0$.
    For~$\|\bar{N} \bar{L} \bar{H}(1)\bar{I}\|_{\infty} \neq 0$, it requires that~$\bar{L}\bar{H}(1)\bar{I}$ does not belong to the column range space of~$\bar{H}(0)\bar{I}$, i.e.,~$\textup{Rank}([\bar{H}(0)\bar{I} ~\bar{L}\bar{H}(1)\bar{I}]) > \textup{Rank}(\bar{H}(0)\bar{I})$. Otherwise, a feasible~$\bar{N}$ to~$\bar{N}\bar{H}(0)\bar{I}=0$ leads to~$\bar{N}\bar{L}\bar{H}(1)\bar{I}=0$. 
\end{Rem}

\begin{Rem}[Perfect setting]
In the perfect setting where the disturbance can be completely decoupled, one can solve the coefficients of the fault detection filter by letting $\bar{\Psi} = 0$ in the optimization problem~\eqref{eq:QP}.
\end{Rem}

We further propose an approximate analytical solution to~\eqref{eq:QP} in the following corollary.

\begin{Cor}[Approximate analytical solution]\label{cor: appro sol} 
Consider the optimization problem~\eqref{eq:QP}. There exists an approximate analytical solution given in the following form:
\begin{align}\label{eq:ana sol}
    \bar{N}^*(\delta) = \frac{\left(\bar{L} \bar{H}(1)\bar{I} e^*_i \right)^{\top}}{2\delta}  \left(\delta^{-1}(\bar{\Phi} + \bar{\Psi})+ \bar{H}(0)\bar{I}\bar{I}^{\top} \bar{H}^{\top}(0)\right)^{-1},
\end{align}

\noindent where~$e^*_i = \arg\,\max_{i\in\{1,\dots,n_x+1\}} |\bar{N}^*(\delta) \bar{L} \bar{H}(1)\bar{I} e_i|$ and $\delta \in \R_+$ is the Lagrange multiplier. The solution~$\bar{N}^*(\delta)$ provides an approximate solution to~\eqref{eq:QP} and will converge to the optimal solution as $\delta$ tends to~$\infty$.   
\end{Cor}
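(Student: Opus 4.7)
The plan is to peel off the two non-smooth features of \eqref{eq:QP} in sequence. First, for the infinity norm, observe that
\begin{equation*}
\|\bar{N}\bar{L}\bar{H}(1)\bar{I}\|_\infty \;=\; \max_{i\in\{1,\dots,n_x+1\}} |\bar{N}\bar{L}\bar{H}(1)\bar{I} e_i|,
\end{equation*}
so, following the discussion right after Theorem~\ref{thm: FD QP}, \eqref{eq:QP} is equivalent to the minimum, taken over $n_x+1$ candidates $e_i$, of a subproblem with smooth quadratic-plus-linear objective
\begin{equation*}
\min_{\bar{N}} \;\bar{N}(\bar{\Phi}+\bar{\Psi})\bar{N}^\top - \bar{N}\bar{L}\bar{H}(1)\bar{I} e_i \qquad \text{s.t. } \bar{N}\bar{H}(0)\bar{I}=0,
\end{equation*}
where the sign ambiguity is absorbed because $\bar{N}\mapsto -\bar{N}$ leaves the constraint and the remaining quadratic objective invariant while flipping the sign of the linear term. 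The role of the $\arg\max$ in \eqref{eq:ana sol} is precisely to select, a posteriori, the candidate $e_i^*$ for which $|\bar{N}^*(\delta)\bar{L}\bar{H}(1)\bar{I} e_i|$ is the largest, i.e.\ the one attaining the infinity norm at the computed minimizer.

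To obtain a closed-form expression for each subproblem I would replace the equality constraint by a quadratic penalty with weight $\delta>0$, leading to the unconstrained strictly convex quadratic
\begin{equation*}
\min_{\bar{N}} \; \bar{N}\bigl[(\bar{\Phi}+\bar{\Psi}) + \delta\,\bar{H}(0)\bar{I}\bar{I}^\top\bar{H}^\top(0)\bigr]\bar{N}^\top - \bar{N}\bar{L}\bar{H}(1)\bar{I} e_i^*.
\end{equation*}
Under the rank conditions highlighted in the feasibility remark, the bracketed matrix is positive definite, so setting the gradient with respect to the row vector $\bar{N}$ to zero yields the unique minimizer
\begin{equation*}
\bar{N}^*(\delta) = \tfrac{1}{2}(\bar{L}\bar{H}(1)\bar{I}e_i^*)^\top\bigl[(\bar{\Phi}+\bar{\Psi}) + \delta\,\bar{H}(0)\bar{I}\bar{I}^\top\bar{H}^\top(0)\bigr]^{-1},
\end{equation*}
and factoring $\delta$ out of the bracketed matrix reproduces \eqref{eq:ana sol} after a routine rearrangement.

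To close the argument I would invoke the classical quadratic-penalty convergence result: as $\delta\to\infty$ the penalty term dominates, forcing $\bar{N}^*(\delta)\bar{H}(0)\bar{I}\to 0$, so the iterates are asymptotically feasible; since the smooth part of the objective is continuous and the sequence is bounded, any limit point is a minimizer of the constrained subproblem, which by the first reduction step minimizes \eqref{eq:QP}. The main subtlety I anticipate is the implicit character of $e_i^*$, whose definition depends on the very $\bar{N}^*(\delta)$ one is solving for; because the index ranges over only $n_x+1$ options, this is resolved by computing the closed form for every candidate $e_i$, checking the self-consistency $e_i=\arg\max_j|\bar{N}^*(\delta)\bar{L}\bar{H}(1)\bar{I}e_j|$, and equivalently retaining the candidate achieving the smallest value of the original objective in \eqref{eq:QP}.
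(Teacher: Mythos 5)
Your proposal follows essentially the same route as the paper's proof: rewrite the infinity norm via the $n_x+1$ linear candidates $\bar{N}\bar{L}\bar{H}(1)\bar{I}e_i$, penalize the equality constraint by $\delta\|\bar{N}\bar{H}(0)\bar{I}\|_2^2$, set the gradient of the resulting convex quadratic in $\bar{N}$ to zero, and select $e_i^*$ a posteriori, with factoring out $\delta$ yielding exactly~\eqref{eq:ana sol}. If anything, you are slightly more careful than the paper on two peripheral points it leaves implicit: you correctly identify the construction as a quadratic penalty (the paper calls $\delta$ a ``Lagrange multiplier'' although the constraint enters quadratically, not linearly), and you sketch the standard penalty-convergence argument for $\delta\to\infty$ as well as an enumeration-plus-consistency fix for the self-referential definition of $e_i^*$.
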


\begin{proof}
The Lagrange dual of~\eqref{eq:QP} can be obtained by penalizing the equality constraint
\begin{align*}
        \mathcal{L} (\bar{N},\delta) = \bar{N} (\bar{\Phi}+\bar{\Psi})\bar{N}^{\top} -\|\bar{N} \bar{L} \bar{H}(1)\bar{I}\|_{\infty} + \delta \|\bar{N} \bar{H}(0)\bar{I}\|^2_2.
\end{align*}
Since the optimization problem~\eqref{eq:QP} can be viewed as a set of QP problems by replacing the term~$\|\bar{N} \bar{L} \bar{H}(1)\bar{I}\|_{\infty}$ with~$\bar{N} \bar{L} \bar{H}(1)\bar{I} e_i$, the set of the dual functions is
\begin{align}\label{eq:dual fun}
    \mathcal{L}_i (\bar{N},\delta) = \bar{N} (\bar{\Phi}+\bar{\Psi})\bar{N}^{\top} -\bar{N} \bar{L} \bar{H}(1)\bar{I} e_i+ \delta \|\bar{N} \bar{H}(0)\bar{I}\|^2_2.
\end{align}
By taking the partial derivative of~$\mathcal{L}_i (\bar{N},\delta)$, we have
\begin{align*}
    \frac{\partial \mathcal{L}_i (\bar{N},\delta)}{\partial \bar{N} } = 2\bar{N}(\bar{\Phi}+\bar{\Psi}) +2\delta \bar{N}\bar{H}(0)\bar{I}\bar{I}^{\top}\bar{H}^{\top}(0) - (\bar{L}\bar{H}(1)\bar{I}e_i)^{\top}.
\end{align*}
Setting the partial derivative to zero leads to
\begin{align*}
    \bar{N}_i^* = \frac{1}{2} \left(\bar{L}\bar{H}(1)\bar{I}e_i \right)^{\top} \left( \bar{\Phi}+\bar{\Psi} +\delta \bar{H}(0)\bar{I} \bar{I}^{\top}\bar{H}^{\top}1(0) \right)^{-1},
\end{align*}
which is an admissible solution to the problem with the dual function~\eqref{eq:dual fun}. By choosing~$e_i$ for~$i \in \{1,\dots,n_x+1\}$ which maximizes~$|\bar{N} \bar{L} \bar{H}(1) \bar{I} e_i|$, we obtain~\eqref{eq:ana sol}. This completes the proof.
\end{proof}

With the analytical solution, one can update the coefficients of the filter online with new data without the need to re-solve~\eqref{eq:QP}, which is an improvement over~\cite{esfahani2015tractable}.

\begin{Rem}[Average objective function]
We consider the average effects of all $\xi_i$ and~$\check{d}_i$ on the residual as the objective function in~\eqref{eq:QP}.
An alternative way is to consider the worst-case scenario, i.e.,~$\max_{i\in\{1,\dots,m\}} \bar{N}(\Phi_i+\Psi_i)\bar{N}^{\top}$. 
The average objective function is, however, of interest if one requires to train the filter with a large number of patterns. This is due to the fact that the computational complexity of the derived QP problem is independent of the number of instances with the average objective function.  
\end{Rem}

\begin{Rem}[Approximate analytical solution with~$\delta$]
The Lagrange multiplier~$\delta$ is introduced in~\eqref{eq:QP} to penalize the equality constraint~$\bar{N} \bar{H}(0) \bar{I} = 0$, and in the ideal case, $\delta$ tends to infinity as stated in Corollary~\ref{cor: appro sol}. 
However, for a bounded~$\delta$, the equality constraint cannot be strictly satisfied, which is why we refer to the solution~\eqref{eq:ana sol} as an approximate analytical solution. Additionally, to ensure that~$\bar{N} \bar{H}(0) \bar{I}$ is sufficiently close to zero, $\delta$ should be large enough while remaining numerically bounded for practical considerations.
\end{Rem}

\begin{Rem}[Control saturation]
    In our problem, we consider a small-signal model of the IBM system with bounded load changes, thus controller saturation is rare and not a major problem we aimed to deal with. Nonetheless, in the case of saturation, we can address this issue by modeling the IBM system with saturation and incorporating the disturbance suppression when saturation happens in design conditions. 
\end{Rem}

To detect the fault, we introduce the power of the residual~$r(k)$ as the evaluation function, i.e.,~$J(r) = r(k)^2$ for~$k \in \N$. Let~$J_{th}$ be the detection threshold. Then, we can consider the following detection logic:
\begin{align*}
    \left\{ \begin{array}{lll}
        J(r) \leq J_{th} & \Rightarrow & \textup{no faults},\\
        J(r) > J_{th}    & \Rightarrow & \textup{faults}.
    \end{array}\right.
\end{align*}
We show the computation method of the threshold and the false alarm rate in the following proposition.

\begin{Prop}[Probabilistic false alarm certificate]\label{prop:prob perform}
Assume that the patterns of the non-decoupled disturbance~$\check{d}$ and the modeling uncertainty~$\xi$ follow different i.i.d. distributions. Consider the system~\eqref{eq:ss4unifiedmodel}, the filter~$\F(\mfq)$ obtained by solving~\eqref{eq:QP} with the corresponding optimal solution~$\bar{N}^*$, and the evaluation function~$J(r)=r(k)^2$. 
Given a scalar~$\lambda \geq 1$, if we set the threshold~$J_{th}$ as
\begin{align}\label{eq:Threshold}
    J_{th}  = \frac{\lambda}{T} \bar{N}^*(\bar{\Phi}+\bar{\Psi})\bar{N}^{*\top},
\end{align}
the false alarm rate in the steady state satisfies 
\begin{align}
    &\lim\limits_{k \rightarrow \infty} \mPr\{J(r(k)) > J_{th} | f = 0\} 
    \leq \frac{1}{\lambda}.
\end{align}
\end{Prop}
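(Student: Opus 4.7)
The natural route is Markov's inequality applied to the nonnegative random variable $J(r(k))=r(k)^2$. If I can establish that the steady-state second moment of the residual in normal mode satisfies
\[
\lim_{k\to\infty}\mE\bigl[r(k)^2 \mid f=0\bigr] \;=\; \frac{1}{T}\,\bar{N}^*(\bar{\Phi}+\bar{\Psi})\bar{N}^{*\top},
\]
then plugging the prescribed $J_{th}=\tfrac{\lambda}{T}\bar{N}^*(\bar{\Phi}+\bar{\Psi})\bar{N}^{*\top}$ into Markov's inequality immediately yields the claimed $1/\lambda$ bound. So the whole argument reduces to computing (or bounding) this steady-state second moment.

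The first substantive step is to take the expression of the residual in the fault-free mode from~\eqref{eq:healthresidual} and kill the state-driven contribution. The equality constraint~\eqref{eq:FullDecoupDist1} enforced by Theorem~\ref{thm: FD QP} guarantees $N(\mfq)H(\mfq,0)|_{\mfq=1}=0$, and the stability of $a(\mfq)$ damps transients. Since $X$ is driven by step-like inputs (constant reference $v_{odq}^{*}$ and random step load change), its stationary component is annihilated and its transient dies out, so as $k\to\infty$ only the contributions of the non-decoupled disturbance $\check{d}$ and the modeling uncertainty $\xi$ remain in $r(k)$.

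The second step is to match these two stochastic contributions with the empirical matrices $\bar{\Phi}$ and $\bar{\Psi}$. By construction in~\eqref{eq:rd compact form}--\eqref{eq:rd QP}, $\bar{N}\Phi_i\bar{N}^{\top}=\|r_{\xi_i}\|^2_{\mathcal L_2}=\sum_{k=1}^T r_{\xi_i}(k)^2$, and analogously for $\Psi_i$. Under the i.i.d.\ hypothesis and the stationarity of the linear filter in steady state, the per-instance time-average $\tfrac{1}{T}\|r_{\xi_i}\|^2_{\mathcal L_2}$ is an unbiased estimator of $\mE[r_\xi(k)^2]$, and averaging over the $m$ patterns gives $\tfrac{1}{T}\bar{N}^*\bar{\Phi}\bar{N}^{*\top}=\mE[r_\xi(k)^2]$ in the stationary regime; the identical argument applied to $\check{d}$ gives $\tfrac{1}{T}\bar{N}^*\bar{\Psi}\bar{N}^{*\top}=\mE[r_{\check d}(k)^2]$. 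Independence of $\xi$ and $\check{d}$ eliminates the cross term, so $\mE[r(k)^2]$ is exactly the sum, yielding the displayed identity. A final application of Markov closes the argument.

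The delicate point---and the step I expect to require the most care---is justifying the identification $\mE[r(k)^2]=\tfrac{1}{T}\bar{N}^*(\bar{\Phi}+\bar{\Psi})\bar{N}^{*\top}$. It rests on (i) ergodicity/stationarity so that per-sample time-averages coincide with cross-sample expectations at a fixed stationary time, (ii) the stability of $a(\mfq)$ ensuring the filter has reached steady state so that initial-condition effects of $X$ are negligible for $k$ large, and (iii) using independence between $\bar{\xi}$ and $\check{d}$ to drop the cross moment. Once these ingredients are carefully invoked, the chain $\mPr\{r(k)^2>J_{th}\}\le \mE[r(k)^2]/J_{th}=1/\lambda$ is immediate.
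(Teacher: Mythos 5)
Your overall route coincides with the paper's: apply Markov's inequality to $r(k)^2$ after relating $\tfrac{1}{T}\bar{N}^*(\bar{\Phi}+\bar{\Psi})\bar{N}^{*\top}$ to the steady-state second moment of the fault-free residual. The genuine gap is in your second step, where you assert the finite-sample identity $\tfrac{1}{T}\bar{N}^*\bar{\Phi}\bar{N}^{*\top}=\mE[r_\xi(k)^2]$ on the grounds of unbiasedness. Unbiasedness only says that the \emph{expectation} of the empirical average equals the population moment; the realized $\bar{\Phi}$ and $\bar{\Psi}$ are random matrices built from $m$ sampled patterns, and for finite $m$ the threshold $J_{th}$ can undershoot $\lambda\,\mE[r(k)^2]$, in which case Markov's inequality no longer delivers the $1/\lambda$ bound. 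The missing ingredient is precisely a law-of-large-numbers step over the training instances: the paper invokes \cite[Theorem~4.11]{esfahani2015tractable} to conclude that $\tfrac{1}{m}\sum_{i=1}^m\|r_{\xi_i}+r_{\check{d}_i}\|^2_{\mathcal{L}_2}-\mE\big[\|r\|^2_{\mathcal{L}_2}\big]\to 0$ almost surely, and establishes only the one-sided bound $J_{th}\geq \lambda\lim_{k\to\infty}\mE[r(k)^2]$ in the double limit $T,m\to\infty$. The certificate is therefore asymptotic in the number and length of training instances, not an exact finite-sample equality as your write-up suggests.

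A secondary imprecision: eliminating the cross term ``by independence'' alone is not valid, since independence only gives $\mE\langle r_\xi, r_{\check{d}}\rangle=\langle \mE r_\xi,\mE r_{\check{d}}\rangle$, which vanishes only if at least one contribution has zero mean. The paper sidesteps claiming equality here as well: it uses the pointwise comparison $\tfrac{1}{m}\sum_i\big(\|r_{\xi_i}\|^2_{\mathcal{L}_2}+\|r_{\check{d}_i}\|^2_{\mathcal{L}_2}\big)\geq \tfrac{1}{m}\sum_i\|r_{\xi_i}+r_{\check{d}_i}\|^2_{\mathcal{L}_2}$, so only the sign of the (empirical) cross term matters, and only an inequality---which is all Markov requires---is carried through to the end. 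If you replace your exact identity by this $\geq$ in the limit $T,m\to\infty$, justified by the strong law of large numbers and a zero-mean (or nonpositive-correlation) argument for the cross term, your proof becomes essentially the paper's.
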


\begin{proof}
Since both~$\check{d}_i$ and~$\xi_i$ follow i.i.d. distributions, the residual in the normal mode as shown in~\eqref{eq:healthresidual} can be viewed as a random variable. 
It is proven in~\cite[Theorem 4.11]{esfahani2015tractable} that the empirical average error
\begin{align*}
    \varepsilon_m = \frac{1}{m}\sum^{m}_{i=1} \|r_{\xi_i} + r_{\check{d}_i}\|^2_{\mathcal{L}_2}  -  \mE\left[\|r\|^2_{\mathcal{L}_2}\right],
\end{align*}
satisfies the strong law of large numbers, i.e.,~$\lim_{m \rightarrow \infty} \varepsilon_m=0$  almost surely. Therefore, in the steady state, it holds that
\begin{align*}
     J_{th} = &\lim_{T,m \rightarrow \infty} \frac{\lambda}{T} \left(\frac{1}{m}\sum^m_{i=1} \|r_{\xi_i}\|^2_{\mathcal{L}_2} +  \frac{1}{m}\sum^m_{i=1} \|r_{\check{d}_i}\|^2_{\mathcal{L}_2} \right)\\
    \geq &\lim_{T,m \rightarrow \infty} \frac{\lambda}{T} \left(\frac{1}{m}\sum^m_{i=1} \|r_{\xi_i}+r_{\check{d}_i}\|^2_{\mathcal{L}_2} \right)\\
    = &\lim_{T \rightarrow \infty}\frac{\lambda}{T} \mE\left[\|r\|^2_{\mathcal{L}_2}\right]  
    = \lambda \lim_{k \rightarrow \infty} \mE[r(k)^2].
\end{align*}
According to Markov inequality, the false alarm rate in the steady state satisfies
\begin{align*}
  \lim\limits_{k \rightarrow \infty} \mPr\{ r(k)^2 >J_{th} | f = 0\} 
  <  \lim\limits_{k \rightarrow \infty} \mPr\{ r(k)^2 >\lambda \mE[r(k)^2] | f = 0\} \leq \frac{1}{\lambda}.
\end{align*}
This completes the proof.
\end{proof}

We further derive the circumstances in which ground faults can be detected by comparing the steady-state value of $r^2$ with~$J_{th}$. 
However, given that the faulty model is unobservable, it is essential to first identify its observable subsystem, denoted as~$(A_{uh,o},[B_{uh1,o}, B_{uh2,o}], C_{uh,o})$, through Kalman decomposition.
Define the transfer function from~$[{v^*_{odq}}^{\top} ~\tau_{dq}]^{\top}$ to $r$ as $\mathcal{T}_{ur}(\mfq) = C_{uh,o}(\mathfrak{q}I-A_{uh,o})^{-1}[B_{uh1,o}, B_{uh2,o}]$.
Then, the ground faults can be detected if
    \begin{align*}        \left(\frac{N(\mathfrak{q})L_0}{a(\mathfrak{q})} \begin{bmatrix}
           \mathcal{T}_{ur}(\mfq) \\ I 
        \end{bmatrix} \begin{bmatrix}
            v^*_{odq} \\ \tau_{dq}
        \end{bmatrix} {\Bigg|}_{\mathfrak{q}=1}  \right) ^2 > J_{th}.
    \end{align*}
    
\section{Numerical results in RTDS} \label{sec:simulation}
In this section, we validate the performance of the fault detection filters through numerical simulations. 
Here, we model the IBM system depicted in Fig.~\ref{fig:Architecture} through the mathematical model (MM)~\eqref{eq:ss4unifiedmodel}, Simulink, and RTDS, respectively.
Note that electrical components integrated into Simulink and RTDS allow for a more realistic simulation of practical scenarios compared to the simplified MM.
In addition, RTDS is widely recognized in the industry for its ability to simulate power systems in real time~\cite{podmore2010role}.
Subsequently, we collect the discrepancy data between the output of MM and those from Simulink and RTDS. Step signals are employed to characterize the unknown load changes. 
Following this, we design the filter by solving the optimization problem~\eqref{eq:QP}. 
Finally, we apply the derived filter to models constructed by Simulink and RTDS to evaluate diagnosis performance in the presence of partially decoupled disturbances and modeling uncertainties.

The parameters of the system are presented in Table~\ref{tab:para} and Table~\ref{tab:init}, sourced from~\cite{pogaku2007modeling} with some modifications. The sampling period is~$0.1$~ms and the simulation time is $6$~s. The reference frequency and reference voltage are~$50$~Hz and $v^*_{odq} = [381,0]^{\top}$, respectively. 
The FCL parameter is~$\tau_{dq} = [35,0.7]^{\top}$.
To design the fault detection filter following the structure of~\eqref{eq:Filter}, we set the degree of~$N(\mfq)$ as~$d_N = 10$ and choose a stable denominator~$a(\mfq)$ with a degree larger than~$d_N$.
We further collect $100$ output discrepancy and disturbance instances with $T = 200$ to construct~$\bar{\Phi}$ and $\bar{\Psi}$, respectively. 
The disturbance is a step signal, i.e., $d = [-15 ~0.1]^{\top}$ for~$k>15000$.
The ground fault occurs at~$k = 40000$.
Given the settings specified above, we utilize the YALMIP toolbox~\cite{Lofberg2004} to address the optimization problem~\eqref{eq:QP} and obtain the fault detection filter.

\begin{minipage}{\textwidth}
\centering
 \begin{minipage}[h]{0.49\textwidth}
  \centering   \makeatletter\def\@captype{table}\makeatother\caption{Microgrid parameters.}\label{tab:para}
        \begin{tabular}{lr |lr}
        \hline Parameter & Value & Parameter & Value \\
        \hline$\omega$ & $314.1 \mathrm{~Hz}$ & $R_{LOAD}$ & $12 \ \Omega$\\
        $L_{f}$ & $3.5 \ \mathrm{mH}$ & $K_P^c$ & 0.3\\
        $R_{f}$ & $0.01 \ \Omega$ & $K_I^c$ & $20$\\
        $C_{f}$ & $21.9 \mu \mathrm{~F}$ & $K_P^v$ & 2 \\
        $L_{c}$ & $1.3 \mathrm{~mH}$ &   $K_I^v$ & 14 \\
        $R_{c}$ & $0.02 \ \Omega$ & $F$ & $0.75$ \\
        \hline
        \end{tabular}
  \end{minipage}
  \begin{minipage}[h]{0.49\textwidth}
   \centering       \makeatletter\def\@captype{table}\makeatother\caption{Initial conditions.}\label{tab:init}
        \begin{tabular}{lr |lr}
        \hline Parameter & Value & Parameter & Value \\
        \hline
        $v_{od}$ & $380.8$  & $i_{ld}$ & $11.4$ \\
        $v_{oq}$ & $0 $        & $i_{lq}$ & $-5.5 \times 10^3$\\
        $i_{od}$ & $11.4$ & $v_{bd}$ & 379.5 \\
        $i_{oq}$ & $0.4$   &   $v_{bq}$ & -6 \\
        $\phi_d$ & $0.13$  &   $\gamma_d$ & $0.0115$\\
        $\phi_q$ & $0$     &   $\gamma_q$ & $0$\\
        \hline
        \end{tabular}
   \end{minipage}
\end{minipage}

Fig.~\ref{fig: Output} illustrates the per-unit (p.u.) values of output currents generated by the three different models, which exhibit close resemblance. 
This implies that the simplified MM aligns well with the more intricate systems constructed using Simulink and RTDS. From the small figures, one can see minor discrepancies between these outputs. 
Furthermore, there is a step change in the output currents after an unknown load change at~$k=15000$, which is similar to the effect caused by the ground fault at $k=40000$. Therefore, discerning the occurrence of a ground fault solely from the output current proves challenging.

Fig.~\ref{fig: Residual} displays the diagnosis results characterized through the values of~$r^2(k)$. 
In the left figure, we show the diagnostic performance of the filter designed to withstand output discrepancies between MM and the Simulink model. 
One can see that~$r^2(k)$ remains below the threshold in the presence of partially decoupled disturbances and modeling uncertainties until the occurrence of the fault at~$k=40000$. 
After the fault occurs, $r^2(k)$ immediately exceeds the threshold and remains higher than the threshold. 
This indicates that the fault is successfully detected with modeling uncertainties and is distinguishable from disturbances.
The right-hand side figure depicts the diagnosis outcome of the filter designed for the RTDS model. 
One can see that this filter effectively suppresses the partially decoupled disturbances and modeling uncertainties and successfully detects the ground fault as well. 
Note that due to the spike induced by the load change in the output current, $r^2(k)$ surpasses the threshold and causes a false alarm. 
Nonetheless, the signal rapidly diminishes below the threshold, distinguishing it from scenarios where ground faults occur. 
To address this issue, we can extend the evaluation time.
\begin{figure}[h]
    \centering
    \includegraphics[width=0.49\columnwidth]{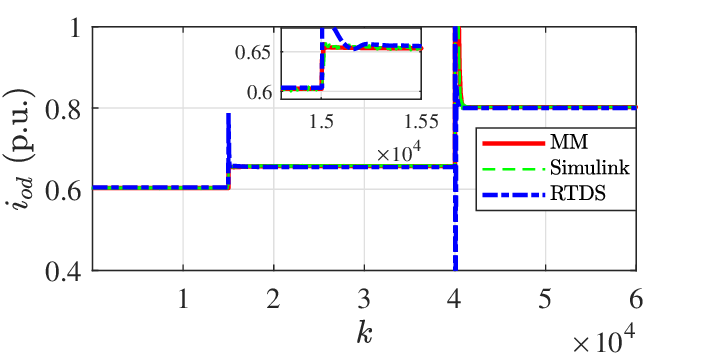}
    \hfill
    \includegraphics[width=0.49\columnwidth]{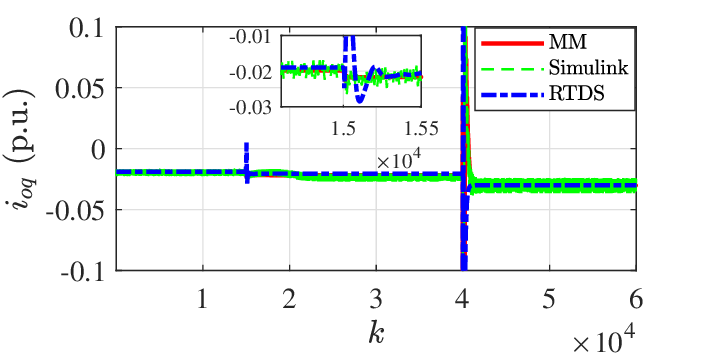}
    \caption{\small Output currents generated by different models.}\label{fig: Output}
\end{figure}

\begin{figure}[h]
    \centering
    \includegraphics[width=0.49\columnwidth]{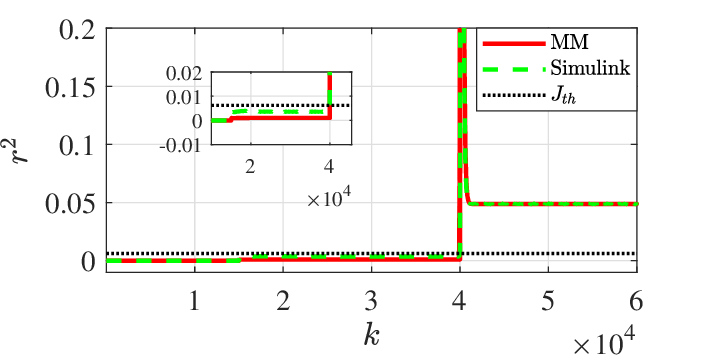}
    \hfill
    \includegraphics[width=0.49\columnwidth]{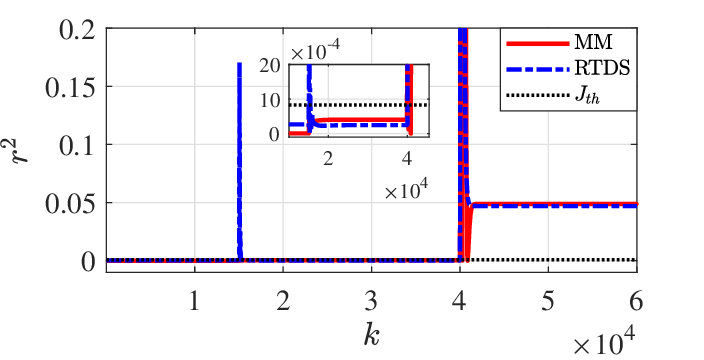} 
    \caption{\small Diagnosis results with different models.}\label{fig: Residual}
\end{figure}


\section{Conclusions} \label{sec:conslusion}
In this paper, we propose a diagnosis scheme for the detection of ground faults in IBM systems with partially decoupled disturbances and modeling uncertainties.
To this end, we reformulate the filter design problem into a QP problem, which enables us to efficiently optimize the filter parameters and meet the desired performance criteria, including decoupling partial disturbances, suppressing effects of non-decoupled disturbances and modeling uncertainties, and ensuring fault sensitivity.
Simulation results on an IBM system constructed using MM, Simulink, and RTDS, respectively, show the effectiveness of the proposed approach.
In future work, we first will consider designing the denominator of the filter for better dynamic performance. The second direction will be focused on extending the proposed approaches to more complex and realistic settings, such as considering the presence of multiple converters.


\appendix
\section{Simulation results for the perfect setting}
In the appendix, we additionally present the simulation results for the perfect scenario where disturbances can be completely decoupled, i.e.,$n_d = 1$, and where uncertainties are absent. 
In particular, we set the matrix~$B_d = \begin{bmatrix}
         \mathbf{0}_{1 \times 8} & [1 ~1] 
         \end{bmatrix}^{\top}$. 
The disturbance is zero for~$k \leq 1000$, and subsequently follows a signal given by~$d(k) = \alpha_0 +\sum^{\eta}_{i=1} \alpha_{i} \sin (\omega_{i} k + \psi_{i})$ for~$k>1000$.
Specifically, the constant~$\alpha_0 \in \R$ represents an abrupt change, while the sinusoidal terms capture the short-term load fluctuations with amplitudes~$\alpha_i$, angular frequencies~$\omega_i \in \R_+$, and phases~$\psi_{i} \in \R$~\cite{esfahani2015tractable}. 
It is worth emphasizing that we deliberately select the parameters of~$B_d$ and magnitude of~$d$ to make the output currents similar in the faulty mode and under the effect of the disturbance, which increases the difficulty of fault detection. We design the fault detection filter using the method proposed in Theorem~\ref{thm: FD QP}. 
As non-decoupled disturbances and modeling uncertainties are disregarded, we assign $\bar{\Phi}=\bar{\Psi} = 0$ within optimization problem~\eqref{eq:QP} for the determination of filter coefficients.
The simulation results are presented in Figure~\ref{fig:Decoupled simulation}.
Note that, different from the p.u. used in the above simulation, the unit of current utilized in the appendix is Amperes.

\begin{figure}[h]
    \centering
    \subfigure[Small load fluctuations.]{\includegraphics[width=0.45\columnwidth]{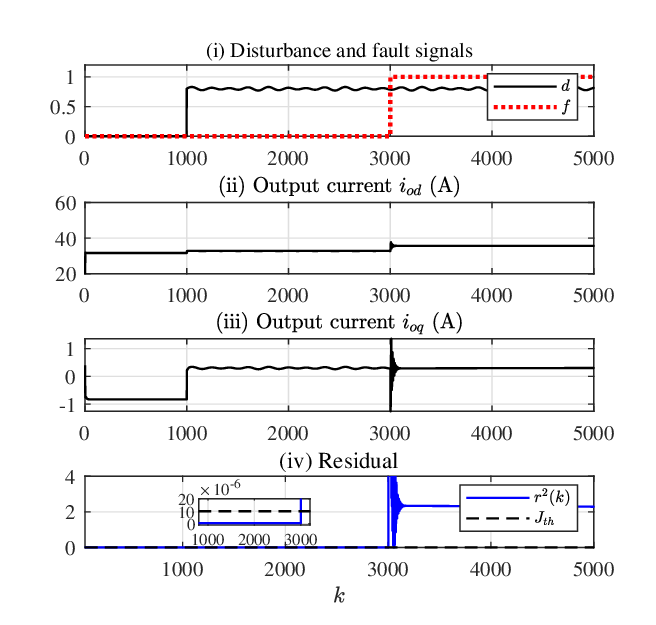}}
    \subfigure[Large load fluctuations.]{\includegraphics[width=0.45\columnwidth]{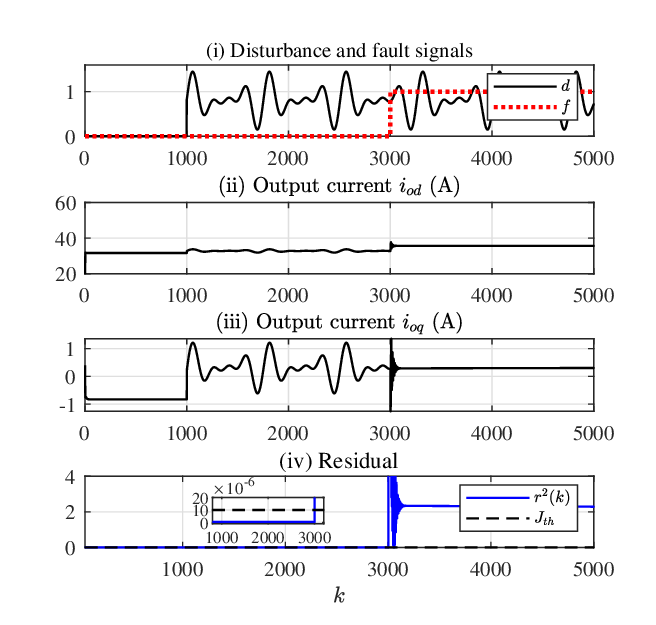}}  
    \caption{\small Diagnosis results for the perfect setting.}\label{fig:Decoupled simulation}
\end{figure}

Figure~\ref{fig:Decoupled simulation}a presents the diagnosis results when a decoupled disturbance has small load fluctuations, i.e.,~$d(k) = 0.8+0.02\sin(k/30)+0.01\sin(k/40)+0.01\sin(k/60)$.
As shown in Figure~\ref{fig:Decoupled simulation}a (i), the disturbance~$d$ and the ground fault~$f$ occur at $k = 1001$ and~$k=3001$, respectively. 
However, $d$ and $f$ have similar effects on the output currents~$i_{od}$ and $i_{oq}$ from Figure~\ref{fig:Decoupled simulation}a (ii) and (iii), which only exhibit minor variations. 
This makes it challenging to detect the occurrence of the ground fault and distinguish it from the disturbance only through the output currents.
In contrast, Figure~\ref{fig:Decoupled simulation}a (iv) illustrates that the residual is insensitive to the disturbance and stays below the threshold until the fault happens.
The power value of the residual $r^2(k)$ exceeds the threshold at $k=3002$, resulting in the detection of the fault within 0.1 ms. 
We further consider a decoupled disturbance with larger load fluctuations, i.e.,~$d(k) = 0.8+0.2\sin(k/30)+0.3\sin(k/40)+0.2\sin(k/60)$. 
Figure~\ref{fig:Decoupled simulation}b displays the diagnosis results and the analysis process is analogous to the previous one.

\section{Comparison with other approaches}
    We employ the wavelet transform approach to analyze the output current signal, which is commonly used for fault detection in the electrical engineering field~\cite{rivas2020faults}.  
    The result is presented in Fig.~\ref{fig: wavelet}. One can observe that the magnitude of the transformed signal increases in a very short time after the fault occurs at $4 \times 10^4$. Then, it reverts to the previous size, making it difficult to distinguish the ground fault from the fault-free case based solely on the wavelet transform result.

    In addition, considering that $H_\infty$ method is widely used in the literature to deal with disturbances and uncertainties, we also use it to compare the diagnosis performance with our proposed data-assisted approach.
    The simulation result is shown in Fig.~\ref{fig: hinf}.
    We can see that though the fault can be detected, the value of $r^2$ is quite small after the fault happens compared to the results obtained using the data-assisted approach, i.e.,~$4.13 \times 10^{-5}$, as shown in Fig.~\ref{fig: Residual}.
    This is because the $H_\infty$ approach is designed to handle all signals, thereby effectively suppressing disturbances and uncertainties but also impacting fault sensitivity greatly. 
    In contrast, our data-assisted approach is more focused and tailored, thus can offer better fault sensitivity. 

    \begin{figure}[h]
        \centering
        \begin{minipage}[c]{0.48\textwidth}
             \includegraphics[width = \textwidth]{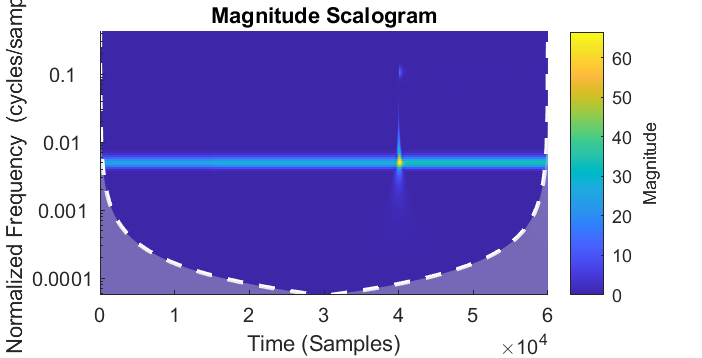}
             \caption{\small Diagnosis results using wavelet transform analysis.}
        \label{fig: wavelet}
        \end{minipage}
        \begin{minipage}[c]{0.48\textwidth}
             \includegraphics[width = \textwidth]{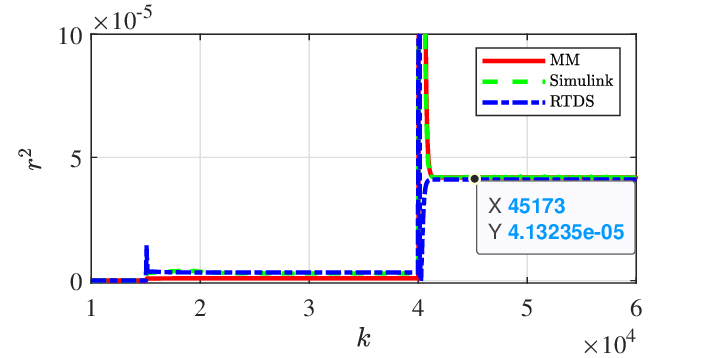}
        \caption{\small Diagnosis results using the $H_\infty$ approach.}
        \label{fig: hinf}
        \end{minipage}
    \end{figure}

\bibliographystyle{elsarticle-num}
\bibliography{ref}
\end{document}